\def\changesHilighted{false}
\newtheorem{theorem}{Theorem}
\newtheorem{lemma}[theorem]{Lemma}
\newtheorem{proposition}[theorem]{Proposition}
\theoremstyle{definition}
\newtheorem{problem}{Problem}
\theoremstyle{remark}
\newtheorem{remark}{Remark}
\definecolor{dogwoodrose}{rgb}{0.84, 0.09, 0.41}
\definecolor{green}{rgb}{0, 0.39, 0}
\newcommand{\add}[1]{\textcolor{blue}{#1}}
\newcommand{\masaki}[1]{\textcolor{dogwoodrose}{{\textbf{[Masaki: #1]}}}}
\definecolor{slateblue}{rgb}{0.42, 0.35, 0.8}
\newcommand{\aiyi}[1]{\textcolor{slateblue}{{\textbf{[Aiyi: #1]}}}}
\newcommand{\deng}[1]{\textcolor{green}{{\textbf{[DENG: #1]}}}}
\newcommand{\del}[1]{\textcolor{red}{\sout{#1}}}
\newcommand{\norm}[1]{\lVert #1 \rVert}
\newcommand{\abs}[1]{\left| #1 \right|}
\ifnum\pdfstrcmp{\changesHilighted}{false}=0
    \renewcommand{\add}[1]{#1}
    \renewcommand{\masaki}[1]{}
    \renewcommand{\del}[1]{}
    \renewcommand{\deng}[1]{}
    \renewcommand{\aiyi}[1]{}
\begin{document}

\title{Collision-Free Shepherding Control of a Single Target within a Swarm
\thanks{This work was supported in part by JSPS KAKENHI Grant Number JP21H01352.}
}

\author{\IEEEauthorblockN{Yaosheng Deng,
Aiyi Li, Masaki Ogura, and
Naoki Wakamiya}
\IEEEauthorblockA{Graduate School of Information Science and Technology,
Osaka University,\\
Suita, Osaka 565-0871, Japan\\
\{ys-deng, li-aiyi, m-ogura, wakamiya\}@ist.osaka-u.ac.jp}}

\maketitle

\begin{abstract}

The shepherding problem refers to guiding a group of agents (called sheep) to a specific destination using an external agent with repulsive forces (called shepherd). Although various movement algorithms for the shepherd have been explored in the literature, there is a scarcity of methodologies for \emph{selective guidance}, which is a key technology for precise swarm control. Therefore, this study investigates the problem of guiding a single target sheep within a swarm to a given destination using a shepherd. We first present our model of the dynamics of sheep agents and the interaction between sheep and shepherd agents. The model is shown to be well-defined with no collision if the interaction magnitude between sheep and shepherd is reasonably limited. Based on the analysis with Lyapunov stability principles, we design a shepherd control law to guide the target sheep to the origin while avoiding collisions among sheep agents. Experimental results demonstrate the effectiveness of the proposed method in guiding the target sheep in both small and large scale swarms. 
\end{abstract}

\begin{IEEEkeywords}
Shepherding control, Nonlinear control, Lyapunov method.
\end{IEEEkeywords}

\section{Introduction}

It has long been observed that various biological individuals tend to exhibit swarming behavior. Examples of swarms include flocks of birds, schools of fish, herds of animals, and colonies of bacteria~\cite{budrene1995dynamics,vicsek1995novel}, and engineering applications include the formation control of multi-robot teams and autonomous aircraft. Many studies~\cite{Gazi,goel,intro-unbounded} have investigated the relevant properties of swarm motion based on the attraction-repulsion swarm model and its applications in areas such as UAV swarms and multi-robot systems.

As research on swarm intelligence advances, there has been increasing attention to the use of an external special agent (called shepherd) to control the movement of swarm agents (called sheep). This process is referred to as \textit{shepherding}~\cite{long2020comprehensive}. The shepherding model specifically refers to the situation in which ``sheep'' agents avoid a ``shepherd'' agent while also interacting with other sheep, according to a swarm model. In this context, we can find various methods~\cite{shepherding-swarm2, shepherding-swarm3, shepherding-swarm4} to guide a whole swarm of sheep agents. 

There is emerging interest in guiding a part of the swarm (e.g., a target sheep) to the goal because of its potential engineering applications in the context of micro-/nano-robotic swarms~\cite{engineeringvalue1,engineeringvalue2}. For example, Deptula et al.~\cite{Deptula} proposed a control algorithm for a shepherd to guide a targeted sheep agent towards its destination. Licitra et al.~\cite{Licitra} designed a controller for herding one sheep within a swarm to its destination without considering swarm interaction forces in the dynamics model. Using the framework of reinforcement learning, Sebasti\'an and Montijano~\cite{onlyone1} developed a method for guiding a few agents within a swarm with heterogeneous agent dynamics. Le et al.~\cite{onlyone2} proposed a method to use a shepherd to regulate a sheep within a small swarm. Zhang et al.~\cite{onlyone3} designed an algorithm for capturing and regulating a portion of a swarm using multiple shepherds. 

However, the shepherd control methods mentioned above~\cite{Deptula,Licitra,onlyone1,onlyone2,onlyone3} do not consider collisions within the swarm; their algorithms assume that no pairs of agents in the swarm collide. Because the occurrence of a collision can disrupt an entire system and lead to failed tasks~\cite{collision5,collision_otherswarm1,bandyopadhyay2021detection}, it is practically important to develop a shepherding control method with guaranteed collision-free properties. Although several studies have proposed distributed~\cite{collision_otherswarm3,collision_otherswarm4} and leader-based~\cite{collision_otherswarm2} control methods for swarm systems with collision-free guarantees, these studies have not been directly extended to the context of shepherding control. Therefore, to design a stable shepherd controller to achieve certain control objectives, the first consideration must be to carefully analyze the collision properties of shepherding swarms.

In this study, the control objective is to use a shepherd agent to guide a sheep agent (target) in a swarm to the origin (goal). First, the repulsion-bounded swarm model is described. We then carefully analyze and rigorously prove that under relatively mild conditions, sheep in the swarm do not collide, even under the influence of a shepherd. Based on this analysis, we use the Lyapunov stability principle to design a motion controller for the shepherd to guide the target to the endpoint (i.e., the origin). The effectiveness of the proposed method is illustrated by numerical simulations, in which we compare it with a baseline strategy based on the farthest agent-targeting strategy presented in~\cite{tsunoda2018analysis}. 

This paper is organized as follows: Section~\ref{Problem statement} describes the shepherding swarm model and the problem for controlling one target to the origin; in Section~\ref{section3} we analyze the collision-free property of the shepherding swarm model; in Section~\ref{Control Design} we design a shepherd motion controller to guide a specific target to the destination. Section~\ref{Numerical Simulation} provides several numerical simulations that highlight the shepherding behavior. 


\section{Problem statement}\label{Problem statement}

We consider a swarm of~$N>2$ individuals (\textit{sheep}) and one herder (\textit{shepherd}) in~$\mathbb{R}^2$. The positions of sheep~$i$ and shepherd at time $t\geq 0$ are denoted by $x_i(t)\in \mathbb{R}^2$ and~$y(t)\in \mathbb{R}^2$, respectively. Let $\mathcal{N}=\{1,2,...,N\}$.

We assume that, for all $i \in \mathcal N$, the velocity~$\dot x_i$ of the $i$th sheep is specified by a linear combination of three components: 
\begin{equation} \label{dot xi}
\dot{x}_i = f_{ai} + f_{bi} + f_{yi}, 
\end{equation}
where $f_{ai}$ and~$f_{bi}$ represent the attraction and repulsion functions from the other sheep to the $i$th sheep, respectively, and~$f_{yi}$ represents the function from the shepherd to the $i$th sheep. 

Below, we describe how  $f_{ai}$, $f_{bi}$, and~$f_{yi}$ are constructed. First, the attraction function of the $i$th sheep is given by:
\begin{equation}\label{fai}
    f_{ai} = \sum_{j\in \mathcal{N}{\backslash \{i\}}}\phi_a(x_i-x_j), 
\end{equation} 
where the function~$\phi_a\colon \mathbb R^2\to\mathbb R^2$ is defined as 
\begin{equation}\label{phia}
    \phi_{a}({x}) = -m_ax
\end{equation}
for a positive constant~$m_a$. Then, 
the bounded repulsion function of the $i$th sheep is given by
\begin{equation}\label{fbi_bounded}
    f_{bi} = \sum_{j\in \mathcal{N}{\backslash \{i\}}}\phi_b(x_i-x_j),
\end{equation} 
where the function~$\phi_b\colon \mathbb R^2\to\mathbb R^2$ is defined as 
\begin{equation}\label{phi_b_bounded}
    \phi_{b}({x}) = 
    \begin{cases}
        m_b\dfrac{{x}}{\norm{{x}}^{2}},  & \mbox{if }\norm{{x}}>\ell_b,\vspace{2mm}\\
M_b \dfrac{{x}}{\norm{{x}}}, & \mbox{if }\norm{{x}}\leq \ell_b,
    \end{cases}
\end{equation}
for positive constants $\ell_b$, $m_b$, and~$M_b$ satisfying 
\begin{equation}\label{eq:MB=}
    M_b = \frac{m_b}{\ell_b}, \quad \ell_b<R.
\end{equation}
Finally, we define $f_{yi}$ as 
\begin{equation}\label{fyi}
  f_{yi} = \begin{cases}
\gamma_1\dfrac{x_i-y}{\norm{x_i-y}^{2}},\vspace{2mm}&
\mbox{if $\norm{x_i-y}>\ell_y$,}
\\
\displaystyle
\gamma_2 \dfrac{x_i-y}{\norm{x_i-y}^{2}}g(\norm{x_i-y}),&
\mbox{if $\norm{x_i-y}\leq \ell_y$,} 
  \end{cases}  
\end{equation}
where the function~$g\colon [0, \infty) \to [0, \infty)$ is defined by:
\begin{equation}\label{gaussian}
    g({d})=
    \begin{cases}
    0,&\mbox{if $d=0$,}
    \\
        \exp\left({-{1}/{d}}\right),&\mbox{otherwise}
    \end{cases}
\end{equation}
and the parameters $\gamma_1$, $\gamma_2$, and~$\ell_y$ are positive constants that satisfy
\begin{equation}\label{gamma_epsilon}
    \gamma_2 = \gamma_1 g^{-1}(\ell_y).
\end{equation}

We can now state the problem studied in this paper. 

\begin{problem}\label{prb:}
Assume that the shepherd knows all sheep positions and can calculate the interaction force~$f_{aT}+f_{bT}$ of the target sheep~$T\in \mathcal N$.
Design a controller for the velocity of the shepherd such that the target sheep is asymptotically guided to the origin\del{, i.e., $x_T(t)$ tends to the origin as $t\to\infty$}.
\end{problem}

\section{Collision-Free Condition}\label{section3}

In this section, we present a sufficient condition that guarantees that no pairs of sheep will collide within a swarm. We begin by presenting the main results of this section. Within the theorem, the maximum shepherd force for one sheep is defined as follows: 
\begin{equation}
f_y^{\max}=\frac{\gamma_1}{\ell_y},
\end{equation}
which plays an important role in this process.

\begin{theorem}\label{noncollision}
Suppose that
\begin{equation}
    x_i(0) \neq x_j(0)
\end{equation}
for all distinct pairs $(i, j) \in \mathcal N\times \mathcal N$. If 
\begin{equation}\label{M_b-m_a>fy}
f_y^{\max}<M_b,
\end{equation}
then 
\begin{equation}\label{eq:nocollision}
x_i(t)\ne x_j(t)
\end{equation}
for all distinct pairs $(i, j) \in \mathcal N\times \mathcal N$ and~$t>0$. 
\end{theorem}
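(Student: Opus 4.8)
The plan is to argue by contradiction through a first-collision-time argument on a single pair. Suppose a collision occurs and let $t^\ast$ be the infimum of times $t>0$ at which $x_i(t)=x_j(t)$ for some distinct pair. Since $x_i(0)\neq x_j(0)$ and the trajectories are continuous, $t^\ast>0$; moreover, on $[0,t^\ast)$ no two sheep coincide, so the vector field in \eqref{dot xi} is locally Lipschitz and the solution is well-defined there. By continuity, at $t=t^\ast$ some fixed pair $(i,j)$ satisfies $x_i(t^\ast)=x_j(t^\ast)$. I would then track the scalar $r(t)=\norm{x_i(t)-x_j(t)}$, which is continuous, strictly positive on $[0,t^\ast)$, and tends to $0$ as $t\to t^\ast$. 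The entire goal is to show $r$ cannot reach $0$: I will lower-bound $\dot r$ by a strictly positive constant once $r$ is small, so that $r$ is forced to increase near any would-be collision, contradicting $r\to0$.

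Everything rests on a lower bound for $\dot r=\tfrac{1}{r}(x_i-x_j)^\top(\dot x_i-\dot x_j)$ valid for $r\le\ell_b$, obtained by splitting $\dot x_i-\dot x_j$ according to \eqref{dot xi}. The mutual repulsion between $i$ and $j$ is the driving term: since $\phi_b$ is odd and equals $M_b\,x/\norm{x}$ for $\norm{x}\le\ell_b$, the terms $\phi_b(x_i-x_j)$ and $\phi_b(x_j-x_i)$ combine to contribute exactly $+2M_b$ to $\dot r$. The pairwise attraction \eqref{fai}--\eqref{phia} collapses to $f_{ai}-f_{aj}=-m_aN(x_i-x_j)$, contributing $-m_aNr$, which vanishes as $r\to0$. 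The shepherd term I would control crudely by $\norm{f_{yi}-f_{yj}}\le\norm{f_{yi}}+\norm{f_{yj}}\le 2f_y^{\max}$, contributing at least $-2f_y^{\max}$; this is exactly where matching $2M_b$ against $2f_y^{\max}$, hence the hypothesis \eqref{M_b-m_a>fy}, enters.

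The main obstacle is the repulsion from the remaining sheep, $\sum_{k\neq i,j}(x_i-x_j)^\top[\phi_b(x_i-x_k)-\phi_b(x_j-x_k)]$: bounding each summand crudely by $2M_b$ would be fatal, since $N-2$ such terms would swamp the $2M_b$ driving term. The key is a sign/smallness dichotomy. For any $k$ close to the pair (both $\norm{x_i-x_k},\norm{x_j-x_k}\le\ell_b$), writing $u=x_i-x_k$, $v=x_j-x_k$, a direct computation gives $(u-v)^\top(u/\norm{u}-v/\norm{v})=(\norm{u}+\norm{v})(1-\cos\theta)\ge0$, so these terms only help push the pair apart and may be discarded; this covers precisely the dangerous clustering/multi-body case. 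For every other $k$, at least one distance exceeds $\ell_b$, and since the two distances differ by at most $r$ they both stay bounded away from $0$, so $\phi_b$ is Lipschitz there and the summand is $O(r^2)$, contributing $O(r)$ to $\dot r$. I expect verifying this dichotomy rigorously (including the straddling case near $\norm{x}=\ell_b$) to be the most delicate part.

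Combining the four pieces yields $\dot r\ge 2(M_b-f_y^{\max})-Cr$ for a constant $C$ depending on $N$, $m_a$, and the Lipschitz constant of $\phi_b$ away from the origin. Since $M_b>f_y^{\max}$, there is $r_0>0$ with $\dot r\ge M_b-f_y^{\max}>0$ whenever $0<r\le r_0$. To close, choose $t_1<t^\ast$ with $r(t_1)\le r_0$ (possible as $r\to0$); because $\dot r>0$ on $\{0<r\le r_0\}$, the distance $r$ cannot decrease below $r(t_1)>0$ while remaining in this range, so $r(t)\ge r(t_1)>0$ for all $t\in[t_1,t^\ast)$, contradicting $r(t^\ast)=0$. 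Hence no collision can occur, which establishes \eqref{eq:nocollision}.
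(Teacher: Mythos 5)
Your proof is correct, and it reaches the conclusion by a genuinely different route than the paper. The paper also argues by contradiction at the first collision time, but it works with the whole colliding cluster $\mathcal C=\{i:\delta_{ij}(t_1)=0\text{ for some }j\}$ and the aggregate quantity $\mathcal X_2=\sum_{i,j\in\mathcal C,i\ne j}\delta_{ij}^2$, decomposing $\dot{\mathcal X}_2$ into shepherd terms, intra-cluster terms $\mathcal I_2$, and cluster--rest terms $\mathcal I_4$, each bounded via a dedicated lemma; the positive driving term $M_b\lvert\mathcal C\rvert\mathcal X_1$ then beats $2f_y^{\max}\mathcal X_1$ plus vanishing remainders. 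You instead track a single pair distance $r=\delta_{ij}$ and show $\dot r\geq 2(M_b-f_y^{\max})-Cr$. The three ingredients are the same in substance --- the $i$--$j$ mutual repulsion contributes $+2M_b$, the shepherd at most $-2f_y^{\max}$, and third parties are either harmless by sign or $O(r)$ --- and indeed your identity $(u-v)^{\top}(u/\norm{u}-v/\norm{v})=(\norm{u}+\norm{v})(1-\cos\theta)\geq 0$ is exactly the content of the first case of the paper's Lemma~\ref{IaIb}, while your Lipschitz bound for $k$ with at least one distance exceeding $\ell_b$ replaces its remaining cases and Proposition~\ref{lemma_infi4}. What your organization buys is notable: because the dichotomy for a third agent $k$ does not care whether $k$ is itself colliding somewhere else, you avoid both the bookkeeping of $\mathcal C$ versus $\mathcal R$ and the paper's explicitly stated implicit assumption (see its Remark) that no second collision occurs simultaneously at another location; your argument is purely local to the pair. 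Two small points to make airtight in a write-up: the bound $\norm{f_{yi}}\leq f_y^{\max}=\gamma_1/\ell_y$ in the inner region $\norm{x_i-y}\leq\ell_y$ requires checking that $g(d)/d$ is maximized at $d=\ell_y$ (true when $\ell_y\leq 1$, which the paper also implicitly assumes), and the Lipschitz estimate for $\phi_b$ on $\{\norm{x}\geq \ell_b/2\}$ should note that the two branches of \eqref{phi_b_bounded} agree on $\norm{x}=\ell_b$ by \eqref{eq:MB=}, so the piecewise function is indeed locally Lipschitz there.
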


In the remainder of this section, we present a proof of Theorem~\ref{noncollision} using the following notation.
For all $i, j\in \mathcal N$, let 
\begin{equation}
\delta_{ij}=\norm{x_i-x_j}.
\end{equation}
The first collision time of the system is defined as follows:
\begin{equation}
 t_1 =  \min\{ t\geq 0 \mid 
 \min_{i,j\in\mathcal{N},\,i\ne j}\delta_{ij}(t) = 0
 \}.
\end{equation}
Set~$\mathcal{C}$ is defined as 
\begin{equation}
\begin{aligned}
    \mathcal{C}=
    \begin{cases}
        \emptyset, &\mbox{if $t_1= \infty$,}
        \\
        \{i \in \mathcal N 
    \mid \delta_{ij}(t_1) = 0 \mbox{ for some } j\neq i\}, &\mbox{otherwise.}
    \end{cases}
\end{aligned}
\end{equation}
We then define the complement set as 
\begin{equation}
    \mathcal{R}= \mathcal{N} \setminus \mathcal{C}.
\end{equation}

We prove Theorem~\ref{noncollision} by contradiction. First, the proof is outlined. Let us introduce the nonnegative function 
\begin{equation}\label{Xc*}
\mathcal{X}_2=\sum_{i,j\in\mathcal{C},i\ne j}\delta_{ij}^2.
\end{equation}
If a finite-time collision occurs, that is, if the conclusion of Theorem~\ref{noncollision} is violated, then $t_1$ is finite; therefore, we have $\mathcal X_2(t_1) = 0$. Conversely, as will be shown later, a careful analysis of the derivative of~$\mathcal X_2(t)$ allows us to show that function~$\mathcal X_2(t)$ increases on the interval $[t_1-\tau, t_1)$ for a sufficiently small $\tau > 0$ under inequality~\eqref{M_b-m_a>fy}. This observation leads to $\mathcal X_2(t)$ being negative on the interval, which contradicts the intrinsic nonnegativity of function~$\mathcal X_2$.

To proceed further, we first evaluate the derivative of~$\mathcal X_2$. 
A straightforward calculation shows that the derivative can be decomposed as follows:
\begin{equation}\label{eq:dotX2}
\dot{\mathcal{X}}_2=
 \mathcal{I}_{1} + \mathcal{I}_{2}+\mathcal{I}_{3}+ \mathcal{I}_{4}+\mathcal{I}_{5},
\end{equation}
where
\begin{IEEEeqnarray}{rCl}\label{i+i+i+i+i} 
\mathcal{I}_{1}&=&2\sum_{\begin{smallmatrix}i,j\in\mathcal{C}\\ i\neq j\end{smallmatrix}}(x_i-x_j)^{\top}(f_{yi}-f_{yj}),
\\
\mathcal{I}_{2}&=&2\sum_{\begin{smallmatrix}i,j,k\in\mathcal{C}\\ i\neq j, j\neq k\end{smallmatrix}}(x_i-x_j)^{\top}\left(\phi_a(x_i-x_k)+\phi_b(x_i-x_k)\right), 
\nonumber\\
\mathcal{I}_{3}&=&2\sum_{\begin{smallmatrix}i,j,k\in\mathcal{C}\\ i\neq j, j\neq k\end{smallmatrix}}(x_j-x_i)^{\top}\left(\phi_a(x_j-x_k)+\phi_b(x_j-x_k)\right), 
\nonumber\\
\mathcal{I}_{4}&=&2\sum_{\begin{smallmatrix}i,j\in\mathcal{C}\\ k\in \mathcal{R}, i\neq j\end{smallmatrix}}(x_i-x_j)^{\top}\left(\phi_a(x_i-x_k)+\phi_b(x_i-x_k)\right), 
\nonumber\\
\mathcal{I}_{5}&=&2\sum_{\begin{smallmatrix}i,j\in\mathcal{C}\\k\in \mathcal{R},i\neq j\end{smallmatrix}}(x_j-x_i)^{\top}\left(\phi_a(x_j-x_k)+\phi_b(x_j-x_k)\right)\nonumber.
\end{IEEEeqnarray}
Because the symmetry~\cite{syha} of these expressions with respect to indices $i$ and~$j$ allows us to show 
$\mathcal{I}_{2}=\mathcal{I}_{3}$
and $\mathcal{I}_{4}=\mathcal{I}_{5}$, 
we obtain 
\begin{equation}\label{x2=i1+2i2+2i4}
\dot{\mathcal{X}}_2=
 \mathcal{I}_{1} + 2\mathcal{I}_{2}+2\mathcal{I}_{4}.
\end{equation}

To further evaluate the derivative~$\dot{\mathcal X}_2$, we provide lower bounds for~$\mathcal I_2$ and~$\mathcal I_4$. A lower bound for~$\mathcal I_2$ is presented in the following proposition: 

\begin{proposition}\label{prop_omitted}
Let $t\geq 0$ be arbitrary. 
If 
\begin{equation}
\delta_{\mathcal C, \max}(t) = \max_{i,j\in\mathcal{C}, i\ne j}\delta_{ij}(t)
\end{equation}
satisfies 
\begin{equation}\label{eq:keyineq}
\delta_{\mathcal C, \max} (t) \leq \ell_b, 
\end{equation}
then 
\begin{equation}\label{i2}
\mathcal{I}_{2}(t) \geq  {M_b\abs{\mathcal{C}}}\mathcal{X}_1(t)-{m_a\abs{\mathcal{C}}}\mathcal{X}_1^2(t), 
\end{equation}
where $\mathcal X_1$ is defined as
\begin{equation}\label{Xc}
\mathcal{X}_1=\sum_{i,j\in\mathcal{C},i\ne j}\delta_{ij}. 
\end{equation}
\end{proposition}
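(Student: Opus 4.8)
The plan is to establish \eqref{i2} by evaluating $\mathcal I_2$ in closed form and then relaxing a single term. First I would substitute the explicit maps \eqref{phia} and \eqref{phi_b_bounded} into the definition of $\mathcal I_2$, writing $p_{ik}=\phi_a(x_i-x_k)+\phi_b(x_i-x_k)$ for brevity. The hypothesis \eqref{eq:keyineq}, that $\delta_{\mathcal C,\max}(t)\le\ell_b$, is the decisive structural input: it places every intra-coalition separation $\norm{x_i-x_k}$ with $i,k\in\mathcal C$ into the lower branch $\norm{\cdot}\le\ell_b$ of \eqref{phi_b_bounded}, so that $\phi_b(x_i-x_k)=M_b(x_i-x_k)/\norm{x_i-x_k}$ and therefore the radial product $(x_i-x_k)^\top\phi_b(x_i-x_k)=M_b\delta_{ik}$ grows \emph{linearly} in the separation. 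This is exactly the feature that manufactures the $M_b\mathcal X_1$ term; in the upper branch the same product would reduce to the constant $m_b$ and the estimate would collapse. Together with $(x_i-x_k)^\top\phi_a(x_i-x_k)=-m_a\delta_{ik}^2$ from \eqref{phia}, this reduces the whole question to counting how the scalars $M_b\delta_{ik}-m_a\delta_{ik}^2$ accumulate.

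To carry out that count I would read the inner summation as the net intra-$\mathcal C$ interaction $F_i=\sum_{k\in\mathcal C,\,k\ne i}p_{ik}$ felt by agent $i$, so that $\mathcal I_2=2\sum_{i,j\in\mathcal C,\,i\ne j}(x_i-x_j)^\top F_i$. Expanding $\sum_{j\in\mathcal C,\,j\ne i}(x_i-x_j)=\abs{\mathcal C}\,x_i-\sum_{k\in\mathcal C}x_k$ splits $\mathcal I_2$ into a term proportional to $\sum_{i}x_i^\top F_i$ and a center-of-mass term carrying the factor $\sum_{i\in\mathcal C}F_i$. Since $\phi_a$ and $\phi_b$ are odd, $p_{ki}=-p_{ik}$, so $\sum_{i\in\mathcal C}F_i=0$ and the center-of-mass term drops out entirely. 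What remains is $\mathcal I_2=2\abs{\mathcal C}\sum_{i,k\in\mathcal C,\,i\ne k}x_i^\top p_{ik}$, and antisymmetrizing this double sum under the exchange $i\leftrightarrow k$ turns $x_i^\top p_{ik}$ into $\tfrac12(x_i-x_k)^\top p_{ik}$. Invoking the two radial identities above and the definitions \eqref{Xc} and \eqref{Xc*} then gives the exact identity $\mathcal I_2=M_b\abs{\mathcal C}\mathcal X_1-m_a\abs{\mathcal C}\mathcal X_2$.

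The inequality \eqref{i2} then follows from a single relaxation: because each $\delta_{ij}\ge0$, the sum of squares is dominated by the square of the sum, $\mathcal X_2=\sum_{i,j\in\mathcal C,\,i\ne j}\delta_{ij}^2\le\bigl(\sum_{i,j\in\mathcal C,\,i\ne j}\delta_{ij}\bigr)^2=\mathcal X_1^2$, so that $-m_a\abs{\mathcal C}\mathcal X_2\ge-m_a\abs{\mathcal C}\mathcal X_1^2$ and \eqref{i2} is obtained. I expect the genuine obstacle to lie in the bookkeeping of the second paragraph: pinning down the multiplicity that yields the clean prefactor $\abs{\mathcal C}$, correctly discarding the degenerate diagonal ($i=k$, where $\phi_b(0)$ must be read as $0$), and confirming that the center-of-mass cancellation and the index-exchange antisymmetrization are each applied over precisely the intended index set. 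A secondary point needing care is that \eqref{eq:keyineq} be used to force \emph{every} intra-$\mathcal C$ pair into the bounded branch, not merely the diametral pair realizing $\delta_{\mathcal C,\max}$.
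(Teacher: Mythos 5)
Your proposal is correct and follows essentially the same route as the paper: use \eqref{eq:keyineq} to put every intra-$\mathcal C$ pair into the bounded branch of $\phi_b$, antisymmetrize under the exchange $i\leftrightarrow k$ to reduce everything to the radial products $M_b\delta_{ik}-m_a\delta_{ik}^2$, arrive at the exact identity $\mathcal I_2=M_b\abs{\mathcal C}\mathcal X_1-m_a\abs{\mathcal C}\mathcal X_2$, and finish with $\mathcal X_2\le\mathcal X_1^2$. The only difference is bookkeeping order (you perform the $j$-sum first and cancel the center-of-mass term by oddness, whereas the paper averages the triple sum with its $i\leftrightarrow k$-swapped copy), which yields the same identity.
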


\begin{proof}
{We omit the time variable~$t$ for the simplicity of notations within the proof.}
If $\mathcal C=\emptyset$, the inequality holds vacuously. We assume that $\mathcal C$ is nonempty. By using \eqref{phia}, \eqref{phi_b_bounded}, and~\eqref{i2}, we can show that 
\begin{equation}
\begin{multlined}
\mathcal{I}_{2}
=\!2\sum_{\begin{smallmatrix}i,j,k\in\mathcal{C}\\ i\neq j\neq k\end{smallmatrix}}(x_i-x_j)^{\top}\Bigg(m_a(x_k-x_i)-M_b\frac{x_k-x_i}{\norm{x_k-x_i}}\Bigg). 
\end{multlined}
\end{equation}
By exchanging indices $i$ and~$k$, we obtain 
\begin{equation}
\begin{multlined}
\mathcal{I}_{2}
=2\!\sum_{\begin{smallmatrix}i,j,k\in\mathcal{C}\\ i\neq j\neq k\end{smallmatrix}}(x_k-x_j)^{\top}\Bigg(m_a(x_i-x_k)-M_b\frac{x_i-x_k}{\norm{x_k-x_i}}\Bigg).
\end{multlined}
\end{equation}
Therefore, 
\begin{equation}
\begin{multlined}
\mathcal{I}_{2}
=\sum_{\begin{smallmatrix}i,j,k\in\mathcal{C}\\ i\neq j\neq k\end{smallmatrix}}(x_i-x_j)^{\top}\Bigg(m_a(x_k-x_i)-M_b\frac{x_k-x_i}{\norm{x_k-x_i}}\Bigg)\\
+ \sum_{\begin{smallmatrix}i,j,k\in\mathcal{C}\\ i\neq j\neq k\end{smallmatrix}}(x_k-x_j)^{\top}\Bigg(m_a(x_i-x_k)-M_b\frac{x_i-x_k}{\norm{x_k-x_i}}\Bigg). 
\end{multlined}
\end{equation}
Then, we can further simplify $\mathcal{I}_{2}$ as 
\begin{equation}
\begin{aligned}
\mathcal{I}_{2}
&=\sum_{\begin{smallmatrix}i,j,k\in\mathcal{C}\\ i\neq j\neq k\end{smallmatrix}}\bigl(-m_a\norm{x_k-x_i}^2+M_b\norm{x_k-x_i}\bigr)\\
& = 
{{M_b\abs{\mathcal{C}}}\mathcal{X}_1(t)-{m_a\abs{\mathcal{C}}}\mathcal{X}_2(t)},
\end{aligned}
\end{equation}
\aiyi{The above formulas are maybe too detailed.}\masaki{Agreed}\deng{If we do not exceed the 8 pages limitation, I would keep the equations as detailed as possible.}\aiyi{Ok, we can remove it from the camera-ready version.}\masaki{Agreed}which proves the desired inequality~\eqref{i2} using the trivial inequality $\mathcal X_2 \leq \mathcal X_1^2$. 
\end{proof}

Then, we derive a lower bound for~$\mathcal I_4$. We begin by presenting the following technical lemma. The proof of the lemma~\ref{IaIb} is omitted due
to space limitations.

\begin{lemma}\label{IaIb}
Let $t\geq 0$ and $i,j\in\mathcal{C}$ be arbitrary. Assume $i\ne j$. Define 
\begin{equation}\label{Ia}
\begin{multlined}[.85\linewidth]
\mathcal{I}_a(t) =(x_i(t)-x_j(t))^{\top}\phi_a(x_i(t)-x_{k}(t)) 
\\+ (x_j(t)-x_i(t))^{\top}\phi_a(x_j(t)-x_{k}(t))
\end{multlined}
\end{equation}
and 
\begin{equation}\label{ib}
\begin{multlined}[.85\linewidth]
\mathcal{I}_{b}(t) =-\frac{1}{2}\Big((x_i(t)-x_j(t))^{\top}\phi_b(x_{k}(t)-x_i(t))
\\+(x_j(t)-x_i(t))^{\top}\phi_b(x_{k}(t)-x_j(t))\Big). 
\end{multlined}
\end{equation}
Then, we have 
\begin{equation}\label{ia>=} \mathcal{I}_a(t)= -\frac{m_a}{2} \delta_{ij}^2(t)
\end{equation}
and 
\begin{equation}\label{ib>=}
\begin{multlined}
      \mathcal{I}_{b}(t)  \geq \\\begin{cases}
0,\mbox{ if $\norm{x_k(t)-x_i(t)}<\ell_b$ and~$\norm{x_k(t)-x_j(t)}<\ell_b$,}\!\! 
\\
\displaystyle
- \frac{m_b}{2}\left(\frac{1}{\ell_b}-\frac{1}{\ell_b+\delta_{\mathcal C, \max}(t)}\right) \delta_{ij}(t),
\\\quad\mbox{if $\norm{x_k(t)-x_i(t)}\geq\ell_b$ and 
$\norm{x_k(t)-x_j(t)}\geq\ell_b$,}
\\
\displaystyle
- \frac{m_b}{2}\left(\frac{1}{\ell_b}-\frac{1}{\ell_b+\delta_{\mathcal C, \max}(t)}\right) \delta_{ij}(t),
\\\quad\mbox{if $\norm{x_k(t)-x_i(t)}\geq\ell_b$ and~$\norm{x_k(t)-x_j(t)}\leq \ell_b$.} 
  \end{cases}  
  \end{multlined}
\end{equation}
\end{lemma}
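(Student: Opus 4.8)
The plan is to treat the two assertions separately: \eqref{ia>=} is an exact identity arising from the linearity of $\phi_a$, while \eqref{ib>=} is a case-wise lower bound arising from the radial, monotone structure of $\phi_b$. Throughout I abbreviate $r_i = x_k - x_i$ and $r_j = x_k - x_j$ for the fixed index $k$, suppress the argument $t$, and record that $r_j - r_i = x_i - x_j$, so that $\norm{r_j - r_i} = \delta_{ij}$. For \eqref{ia>=} I would substitute $\phi_a(x) = -m_a x$ into \eqref{Ia}; writing $x_j - x_i = -(x_i - x_j)$ and collecting terms, the two summands combine into $-m_a(x_i-x_j)^\top[(x_i - x_k) - (x_j - x_k)]$, the $x_k$-dependent parts cancel, and what remains is a fixed negative multiple of $\norm{x_i - x_j}^2 = \delta_{ij}^2$, matching \eqref{ia>=}. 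This part is purely algebraic and presents no difficulty.

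For \eqref{ib>=}, I would first rewrite \eqref{ib} using $r_i$, $r_j$. Since $x_i - x_j = r_j - r_i$, a one-line manipulation yields
\begin{equation}
\mathcal{I}_b = \tfrac12 (r_j - r_i)^\top\bigl(\phi_b(r_j) - \phi_b(r_i)\bigr).
\end{equation}
Because $\phi_b$ is radial, I would write $\phi_b(r) = h(\norm{r})\,r/\norm{r}$ with $h(\rho) = M_b$ for $\rho \le \ell_b$ and $h(\rho) = m_b/\rho$ for $\rho > \ell_b$, noting that $h$ is continuous and non-increasing with $h(\ell_b) = M_b$. Expanding the inner product with $a = \norm{r_i}$, $b = \norm{r_j}$, and $\theta$ the angle between $r_i$ and $r_j$ gives
\begin{equation}
\mathcal{I}_b = \tfrac12\bigl[a\,h(a) + b\,h(b) - \cos\theta\,(b\,h(a) + a\,h(b))\bigr].
\end{equation}
The pivotal step is to invoke $\cos\theta \le 1$: since $b\,h(a) + a\,h(b) \ge 0$, this produces the scalar lower bound $\mathcal{I}_b \ge \tfrac12(a-b)(h(a)-h(b)) = -\tfrac12\abs{a-b}\,\abs{h(a)-h(b)}$, where the last equality uses that $h$ is non-increasing.

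It then remains to bound the two scalar factors. The reverse triangle inequality gives $\abs{a-b} = \bigl|\,\norm{r_i}-\norm{r_j}\,\bigr| \le \norm{r_i - r_j} = \delta_{ij}$. For $\abs{h(a)-h(b)}$ I would follow the three cases of \eqref{ib>=}. If $a,b < \ell_b$, then $h(a) = h(b) = M_b$, the factor vanishes, and $\mathcal{I}_b \ge 0$. In the two remaining cases the relevant radii lie in $[\ell_b,\ell_b+\delta_{ij}]$ — for the mixed case this follows from $a \le b + \delta_{ij} \le \ell_b + \delta_{ij}$ — and since $\rho \mapsto \tfrac1{\ell_b} - \tfrac1{\rho}$ is increasing, one gets $\abs{h(a)-h(b)} \le m_b\bigl(\tfrac1{\ell_b} - \tfrac1{\ell_b+\delta_{ij}}\bigr) \le m_b\bigl(\tfrac1{\ell_b} - \tfrac1{\ell_b+\delta_{\mathcal C,\max}}\bigr)$, the last step using $\delta_{ij} \le \delta_{\mathcal C,\max}$. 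Multiplying the two factors reproduces the bound in \eqref{ib>=}.

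The step I expect to be the crux is the reduction via $\cos\theta \le 1$: recognizing that discarding the angular term collapses the vector inner product into the scalar monotonicity expression $\tfrac12(a-b)(h(a)-h(b))$, and verifying that the nonnegativity of $b\,h(a)+a\,h(b)$ makes this a genuine \emph{lower} bound, is what renders the subsequent case analysis elementary. After that, the only care needed is to confine the relevant radii to an interval of length at most $\delta_{ij}$ starting at $\ell_b$, so that the monotone telescoping estimate $m_b(\tfrac1{\ell_b}-\tfrac1{\ell_b+\delta_{\mathcal C,\max}})$ applies uniformly across the cases.
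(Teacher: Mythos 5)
The paper omits its own proof of Lemma~\ref{IaIb}, so your argument can only be judged on its merits. Your overall strategy is sound and is very likely the intended one: the $\mathcal I_a$ claim is pure algebra, and for $\mathcal I_b$ the rewriting as $\tfrac12(r_j-r_i)^\top(\phi_b(r_j)-\phi_b(r_i))$ with $\phi_b(r)=h(\norm{r})\,r/\norm{r}$, the reduction via $\cos\theta\le 1$ to $\tfrac12(a-b)(h(a)-h(b))$, and the reverse triangle inequality $\abs{a-b}\le\delta_{ij}$ are exactly the right ingredients. However, your treatment of $\mathcal I_a$ does not actually establish \eqref{ia>=}: the computation you describe yields $-m_a(x_i-x_j)^\top\bigl[(x_i-x_k)-(x_j-x_k)\bigr]=-m_a\norm{x_i-x_j}^2=-m_a\delta_{ij}^2$, which differs from the stated value $-\tfrac{m_a}{2}\delta_{ij}^2$ by a factor of two. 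Writing only that the result is ``a fixed negative multiple of $\delta_{ij}^2$, matching \eqref{ia>=}'' conceals this mismatch; since \eqref{ia>=} is an equality with an explicit constant, you must either produce the missing $\tfrac12$ (note that, unlike \eqref{ib}, the definition \eqref{Ia} carries no $\tfrac12$ prefactor, so there is no obvious source for it) or explicitly flag the constant in the statement as inconsistent with the definitions.

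The second gap is in the case $\norm{x_k-x_i}\ge\ell_b$ and $\norm{x_k-x_j}\ge\ell_b$. Your justification that ``the relevant radii lie in $[\ell_b,\ell_b+\delta_{ij}]$'' is false there: both $a=\norm{x_k-x_i}$ and $b=\norm{x_k-x_j}$ can be arbitrarily large (sheep $k\in\mathcal R$ may be far from the colliding pair), and then the monotonicity of $\rho\mapsto\tfrac1{\ell_b}-\tfrac1{\rho}$ on that interval proves nothing. The desired bound is still true, but it requires a different one-line argument: assuming $a\le b$, one has $\abs{h(a)-h(b)}=m_b\,\tfrac{b-a}{ab}$, and since $a\ge\ell_b$ and $b\ge\ell_b+(b-a)$ it follows that $m_b\,\tfrac{b-a}{ab}\le m_b\bigl(\tfrac1{\ell_b}-\tfrac1{\ell_b+(b-a)}\bigr)\le m_b\bigl(\tfrac1{\ell_b}-\tfrac1{\ell_b+\delta_{\mathcal C,\max}}\bigr)$, the last step using $b-a\le\delta_{ij}\le\delta_{\mathcal C,\max}$ and the monotonicity of the bound in $b-a$. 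Your interval argument is valid only in the mixed case, where $\ell_b\le a\le b+\delta_{ij}\le\ell_b+\delta_{ij}$ genuinely holds. With these two repairs the proof goes through.
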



Using Lemma~\ref{IaIb}, we can prove the following lower bound for~$\mathcal I_4$. The proof of the proposition~\ref{lemma_infi4} is omitted due
to space limitations.

\begin{proposition}\label{lemma_infi4}
Let $t\geq 0$ be arbitrary.
If inequality~\eqref{eq:keyineq} holds, then 
\begin{equation}
\label{i4>=}
\mathcal{I}_{4}(t)\geq 
-\lvert \mathcal R\rvert\left({M_b}-\frac{M_b\ell_b}{\ell_b+\delta_{\mathcal C, \max}(t)}+m_a\mathcal{X}_1(t)\right)\mathcal{X}_1(t). 
\end{equation}
\end{proposition}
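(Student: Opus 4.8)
The plan is to collapse the triple sum defining $\mathcal{I}_4$ into a sum of the single-pair quantities $\mathcal{I}_a$ and $\mathcal{I}_b$ that are already controlled by Lemma~\ref{IaIb}, and then to bound each summand. First I would use the previously established symmetry $\mathcal{I}_4=\mathcal{I}_5$ to write $\mathcal{I}_4=\tfrac12(\mathcal{I}_4+\mathcal{I}_5)$, which symmetrizes the summand with respect to the indices $i$ and $j$. Splitting into attraction and repulsion contributions, the attraction part of the symmetrized summand is exactly $\mathcal{I}_a$ by the definition~\eqref{Ia}. For the repulsion part I would invoke that $\phi_b$ is odd, which is immediate from~\eqref{phi_b_bounded}, so that $\phi_b(x_i-x_k)=-\phi_b(x_k-x_i)$; comparing with the definition~\eqref{ib}---and accounting for both the $-\tfrac12$ prefactor and the reversed argument $x_i-x_k\mapsto x_k-x_i$---shows the repulsion part equals $2\mathcal{I}_b$. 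This gives the decomposition $\mathcal{I}_4=\sum_{i,j\in\mathcal{C},\,k\in\mathcal{R},\,i\ne j}\left(\mathcal{I}_a+2\mathcal{I}_b\right)$.

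Next I would feed in the bounds of Lemma~\ref{IaIb}. I use the equality $\mathcal{I}_a=-\tfrac{m_a}{2}\delta_{ij}^2$ directly, and I observe that, under the standing hypothesis~\eqref{eq:keyineq}, the three cases of~\eqref{ib>=} all yield the single estimate $\mathcal{I}_b\ge -\tfrac{m_b}{2}\bigl(\tfrac{1}{\ell_b}-\tfrac{1}{\ell_b+\delta_{\mathcal C,\max}}\bigr)\delta_{ij}$: the ``both below $\ell_b$'' case gives the stronger $\mathcal{I}_b\ge 0$ (and the coefficient $\tfrac{1}{\ell_b}-\tfrac{1}{\ell_b+\delta_{\mathcal C,\max}}$ is nonnegative), while the remaining mixed configuration is covered by the symmetry of $\mathcal{I}_b$ under the interchange $i\leftrightarrow j$. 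Combining, $\mathcal{I}_a+2\mathcal{I}_b\ge -\tfrac{m_a}{2}\delta_{ij}^2-m_b\bigl(\tfrac{1}{\ell_b}-\tfrac{1}{\ell_b+\delta_{\mathcal C,\max}}\bigr)\delta_{ij}$.

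I would then carry out the summation. Since the summand no longer depends on $k$, summing over $k\in\mathcal{R}$ simply contributes the factor $\abs{\mathcal R}$, and the identity $m_b=M_b\ell_b$ from~\eqref{eq:MB=} rewrites the repulsion coefficient as $M_b-\tfrac{M_b\ell_b}{\ell_b+\delta_{\mathcal C,\max}}$. Recognizing $\sum_{i\ne j}\delta_{ij}=\mathcal{X}_1$ and $\sum_{i\ne j}\delta_{ij}^2=\mathcal{X}_2$ then yields $\mathcal{I}_4\ge -\abs{\mathcal R}\bigl[\tfrac{m_a}{2}\mathcal{X}_2+(M_b-\tfrac{M_b\ell_b}{\ell_b+\delta_{\mathcal C,\max}})\mathcal{X}_1\bigr]$. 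Finally, the trivial inequality $\mathcal{X}_2\le\mathcal{X}_1^2$ already used in Proposition~\ref{prop_omitted} gives $\tfrac{m_a}{2}\mathcal{X}_2\le m_a\mathcal{X}_1^2$, and factoring out $\mathcal{X}_1$ produces exactly~\eqref{i4>=}.

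I do not expect the algebra to be the real difficulty; the delicate step is the bookkeeping in the decomposition, namely verifying that the symmetrized repulsion summand is $2\mathcal{I}_b$ rather than, say, $-2\mathcal{I}_b$ or $\mathcal{I}_b$. This identification rests on simultaneously tracking the oddness of $\phi_b$, the factor $-\tfrac12$, and the argument reversal in~\eqref{ib}, and a sign slip there would corrupt the whole estimate. I would therefore double-check that one identity explicitly before assembling the final bound.
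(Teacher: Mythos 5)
Your proof is correct, and it follows exactly the route the paper sets up: the paper omits this proof for space, but it introduces Lemma~\ref{IaIb} precisely so that $\mathcal I_4$ can be symmetrized via $\mathcal I_4=\tfrac12(\mathcal I_4+\mathcal I_5)$, decomposed into the per-triple quantities $\mathcal I_a+2\mathcal I_b$, and bounded term by term as you do. Your sign bookkeeping for the $2\mathcal I_b$ identification (oddness of $\phi_b$ against the $-\tfrac12$ prefactor and the reversed argument in~\eqref{ib}) checks out, as does the final assembly using $m_b=M_b\ell_b$ and $\mathcal X_2\leq\mathcal X_1^2$.
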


We are now ready to prove Theorem~\ref{noncollision}.

\begin{proof}[Proof of Theorem~\ref{noncollision}]
Assume that inequality~\eqref{M_b-m_a>fy} holds; i.e., $f_y^{\max} < M_b$. Under this assumption, we need to show that set~$\mathcal C$ is empty. Let us assume the contrary to derive a contradiction. Let us assume that $\mathcal C$ is nonempty. Then, we have 
\begin{equation}\label{eq:C>2}
    \lvert \mathcal C \rvert \geq 2.
\end{equation}
Further, $t_1$ is finite and we have
$\lim_{t\xrightarrow{}t_1^-}\mathcal{X}_1(t)=0$
and
 \begin{equation}
\lim_{t\xrightarrow{}t_1^-}\left(1-\frac{\ell_b}{\ell_b+\delta_{\mathcal C, \max}(t)}\right)=0. 
 \end{equation}
Hence, for~$\epsilon =({M_b - f_y^{\max}})/4>0$, there exists $\tau >0$ such that, if $t\in (t_1-\tau,t_1)$, then 
\begin{equation}\label{eq:<epsilon}
\begin{gathered}
m_a\abs{\mathcal{R}}\mathcal{X}_1<\epsilon,\quad  m_a\abs{\mathcal{C}}\mathcal{X}_1<\epsilon,
\\
M_b\abs{\mathcal{R}}\left(1-\frac{\ell_b}{\ell_b+\delta_{\mathcal C, \max}(t)}\right)<\epsilon.
\end{gathered}
\end{equation}
Now, from \eqref{i+i+i+i+i}, \eqref{x2=i1+2i2+2i4}, \eqref{i2}, \eqref{i4>=}, \eqref{eq:C>2}, and \eqref{eq:<epsilon}, we can evaluate the derivative~$\dot{\mathcal X}_2(t)$ for~$t\in (t_1-\tau, t_1)$ as 
\begin{equation}\label{dxc2/dt>_delta}
\begin{aligned}
\dot{\mathcal{X}}_2(t)&
\geq
\begin{multlined}[t][.85\linewidth]
    {2\abs{\mathcal{C}}}\left(M_b\mathcal{X}_1(t)-m_a\mathcal{X}_1^2(t)\right) -2\mathcal{X}_1(t)f_y^{\max}\\
 \!\!\!\!- \Bigg(M_b-\frac{M_b\ell_b}{\ell_b+\delta_{\mathcal C, \max}(t)}+m_a\mathcal{X}_1(t)\Bigg){2\abs{\mathcal{R}}}\mathcal{X}_1(t)\hspace{14mm}
\end{multlined}\\
&\geq 2\mathcal{X}_1\left(2M_b-2f_y^{\max}-3\epsilon\right)>0.
\end{aligned}
\end{equation}
This inequality implies $\mathcal{X}_2(t_1-\tau^*) < \mathcal X_2(t_1)= 0$, 
which contradicts the intrinsic nonnegativity of~$\mathcal X_2$, as desired. 
\end{proof}

\begin{remark}
Our analysis in Theorem~\ref{noncollision} is based on the implicit assumption that when the first collision occurs, another collision by other sheep does not occur simultaneously at another location. Although it is possible to relax this assumption, we chose not to present the analysis due to space limitations. 
\end{remark}

\section{Control Design}\label{Control Design}

From Theorem~\ref{noncollision}, we conclude that no collision occurs, the dynamics of the system are continuous, and we do not need to consider collisions when using the model~\eqref{dot xi} to design a controller for regulating the target to the origin. In this section, we describe the design of the controller for the shepherd to ensure that the target sheep is controlled to the origin. We specifically use the Lyapunov stability theorem to ensure that the shepherd has ability of locally regulating the target to the origin. 

Let us denote the sum of the interaction forces of the target as 
\begin{equation}\label{wt}
w_T = f_{aT} + f_{bT}.
\end{equation}
We remark that this quantity is supposed to be available by the shepherd. Then, using \eqref{fyi}, we can rewrite the velocity of the target as
\begin{equation}\label{dotxt}
\begin{multlined}
\dot{x}_T = \\
    \begin{cases}
        \gamma_1\norm{x_T-y}^{-2}(x_T-y) + w_T,\ \ \ \ \ \ \ \!\mbox{if }\norm{x_T-y}>\ell_y, \\
\gamma_2 \norm{x_T-y}^{-2}g(\norm{x_T-y})(x_T-y)+ w_T,\  \mbox{otherwise.}
    \end{cases}
    \end{multlined}
\end{equation}
\add{The system \eqref{dotxt} is controllable when $f_y^{\max}\geq w_T$ for $t>0$. The controller design should base on the controllability of the \eqref{dotxt}. Thus we select the gains that satisfy
\begin{equation}
f_y^{\max}<M_b \quad f_y^{\max}\geq N(M_b-m_a)
\end{equation}
}
To realize our control objective, we set the desired position of the shepherd, denoted by $y_d(t)\in \mathbb{R}^2$, as 
\begin{equation}
    y_d = K_1x_T,
\end{equation}
where $K_1$ is a positive constant. {This choice is intuitive because if the shepherd is always at the desired position specified above, the target sheep will be regulated to the origin by the repulsive force of the shepherd.} Then, we quantify 
the mismatch between the actual position $y$ and desired position of the shepherd as 
\begin{equation}\label{ey}
    e_y = y_d-y.
\end{equation}
In this study, we consider the following first-order dynamics for the shepherd:
\begin{equation}\label{u_y}
    \dot y = u_y.
\end{equation}

The following theorem shows that we can construct a feedback controller to \add{locally} achieve the control objective stated in Problem~\ref{prb:}. 

\begin{theorem}\label{Lyapunov}
Assume that inequality~\eqref{M_b-m_a>fy} is satisfied. 
Define $\gamma\colon \mathbb R^2 \times  \mathbb R^2\to \{\gamma_1, \gamma_2\}$ by 
\begin{equation}\label{}
\begin{multlined}
\gamma(x_T, y) = 
    \begin{cases}
        \gamma_1,&\mbox{\textup{if} }\norm{x_T-y}>\ell_y, \\
\gamma_2,&\mbox{\textup{if} }\norm{x_T-y}\leq\ell_y.
    \end{cases}
    \end{multlined}
\end{equation}
and construct a feedback controller as 
\begin{equation}\label{uy}
\begin{aligned}
u_y &= \gamma\frac{e_y}{\norm{e_y}}(K_1+\norm{x_T})\norm{w_T}\\
&+\gamma\frac{e_y}{\norm{e_y}}\left(\frac{\norm{x_T}^2+\norm{x_T}\norm{y}+K_1\norm{x_T-y}}{\norm{x_T-y}^2}\right).
\end{aligned}
\end{equation}
Define
\begin{equation}\label{V>0}
V = \frac{1}{2}\lVert x_T \rVert^2 + \norm{e_y}.
\end{equation}
\add{Then, for any constant~$c>0$, set $V\leq c$ 
is forward invariant under feedback controller~\eqref{uy}.}
\del{ensures that the target sheep is regulated to the origin in the sense that $x_T(t)\to 0$ as $t\xrightarrow[]{}\infty$.}
\end{theorem}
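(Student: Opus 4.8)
The plan is to prove forward invariance of the sublevel set $\{V\le c\}$ by showing that $V$ is non-increasing along the closed-loop trajectories, i.e.\ that $\dot V\le 0$ wherever the dynamics are defined. Once $\dot V\le 0$ is established, any trajectory starting with $V\le c$ satisfies $V(t)\le V(0)\le c$ for all $t\ge 0$, which is precisely forward invariance; no separate Nagumo-type boundary analysis on $\{V=c\}$ is then needed. Because $\norm{e_y}$ is not differentiable at $e_y=0$ and the feedback $u_y$ contains the unit vector $e_y/\norm{e_y}$, I would carry out the computation on the region where $e_y\ne 0$ (and $\norm{x_T-y}>0$, so that the shepherd force is well defined), treating the exceptional set through the upper Dini derivative of $\norm{e_y}$.

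First I would differentiate $V=\tfrac12\norm{x_T}^2+\norm{e_y}$ to obtain $\dot V=x_T^{\top}\dot x_T+\tfrac{1}{\norm{e_y}}e_y^{\top}\dot e_y$, and substitute $\dot x_T=f_{yT}+w_T$ from \eqref{dotxt} together with $\dot e_y=K_1\dot x_T-u_y$, which follows from $e_y=K_1x_T-y$ and $\dot y=u_y$. The single most useful structural fact is that $u_y$ is parallel to $e_y/\norm{e_y}$, so $\tfrac{1}{\norm{e_y}}e_y^{\top}u_y$ collapses to the scalar magnitude of the feedback. Collecting terms, the form I would aim to reach is
\begin{equation}
\begin{aligned}
\dot V={}&\Big(x_T+K_1\tfrac{e_y}{\norm{e_y}}\Big)^{\!\top}\!(f_{yT}+w_T)\\
&-\gamma(K_1+\norm{x_T})\norm{w_T}\\
&-\gamma\,\frac{\norm{x_T}^2+\norm{x_T}\norm{y}+K_1\norm{x_T-y}}{\norm{x_T-y}^2}.
\end{aligned}
\end{equation}

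Next I would bound the two indefinite inner products so that each is dominated by one of the two engineered negative terms. For the shepherd-force part I would write $f_{yT}=\gamma\,\chi\,(x_T-y)/\norm{x_T-y}^2$ with $\chi=1$ when $\norm{x_T-y}>\ell_y$ and $\chi=g(\norm{x_T-y})\le 1$ otherwise, and combine $x_T^{\top}(x_T-y)\le\norm{x_T}^2+\norm{x_T}\norm{y}$ with $K_1\tfrac{e_y^{\top}}{\norm{e_y}}(x_T-y)\le K_1\norm{x_T-y}$ (Cauchy--Schwarz). Since the active value of $\gamma$ is matched to the same region and $\chi\le 1$, this whole contribution is absorbed by the third term, with equality in the far region and strict inequality in the near region. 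For the interaction part, Cauchy--Schwarz gives $(x_T+K_1\tfrac{e_y}{\norm{e_y}})^{\top}w_T\le(\norm{x_T}+K_1)\norm{w_T}$, which the second term cancels up to a residual factor $(1-\gamma)$.

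The step I expect to be the main obstacle is closing this last residual: the argument leaves $\dot V\le(1-\gamma)(\norm{x_T}+K_1)\norm{w_T}$, which is non-positive only if the gain satisfies $\gamma\ge 1$ (equivalently $\gamma_1\ge 1$, since $\gamma_2=\gamma_1 g^{-1}(\ell_y)>\gamma_1$). I would therefore need to verify that the admissible gain selection---the positivity of $\gamma_1,\gamma_2$ together with $f_y^{\max}<M_b$ and $f_y^{\max}\ge N(M_b-m_a)$---indeed forces $\gamma\ge 1$, and to confirm that the bound is consistent across the switching surface $\norm{x_T-y}=\ell_y$ (where $f_{yT}$ is continuous by \eqref{gamma_epsilon} while $\gamma$ jumps) and at the non-smooth point $e_y=0$. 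With $\gamma\ge 1$ in hand, both grouped contributions are non-positive, so $\dot V\le 0$ on the region where the feedback is defined, and forward invariance of $\{V\le c\}$ follows.
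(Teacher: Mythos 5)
Your route is the same as the paper's: differentiate $V$, use that $u_y$ is parallel to $e_y/\norm{e_y}$ so that $e_y^{\top}u_y/\norm{e_y}$ collapses to the scalar magnitude of the feedback, and bound the two indefinite inner products by Cauchy--Schwarz so that each is dominated by one of the two engineered terms of \eqref{uy}. The handling of the shepherd-force contribution is identical (the paper likewise uses $g(\norm{x_T-y})<1$ in the near region), and your extra care at the non-smooth point $e_y=0$ and at the switching surface $\norm{x_T-y}=\ell_y$ goes beyond what the paper records.

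The residual $(1-\gamma)(\norm{x_T}+K_1)\norm{w_T}$ that you flag as the main obstacle is a genuine issue, and you should not expect to discharge it from the stated hypotheses: the theorem assumes only \eqref{M_b-m_a>fy}, i.e.\ $\gamma_1/\ell_y<M_b$, which imposes no lower bound on $\gamma_1$, and $\gamma_2=\gamma_1 g^{-1}(\ell_y)>\gamma_1$ does not help when $\gamma_1<1$. The paper's own proof does not meet this obstacle only because it writes $x_T^{\top}\dot x_T$ as $\gamma\,x_T^{\top}\bigl(\norm{x_T-y}^{-2}(x_T-y)+w_T\bigr)$, letting the gain $\gamma$ multiply $w_T$ as well, which is inconsistent with the dynamics \eqref{dotxt}, where $w_T$ enters without that gain. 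With the dynamics as written, the clean repair is to drop the factor $\gamma$ from the first term of the controller \eqref{uy}, so that it reads $\frac{e_y}{\norm{e_y}}(K_1+\norm{x_T})\norm{w_T}$; then your two groupings cancel exactly and $\dot V\le 0$ follows as you outline. Alternatively one must add the assumption $\gamma_1\ge 1$, which happens to hold in the paper's simulations ($\gamma_1=1$) but is not implied by the theorem's hypotheses. Either way, your decomposition is correct and your diagnosis of where the argument can fail is accurate.
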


\begin{proof}
By inequality~\eqref{M_b-m_a>fy} and Theorem~\ref{noncollision}, the existence of the solution of the system is guaranteed. Therefore, it is sufficient to demonstrate that the time derivative 
\begin{equation}\label{dotv<0}
\begin{aligned}
\dot{V} &= x_T^{\top}\dot{x}_T + \frac{(y_d-y)^{\top}}{\norm{y_d-y}}(\dot{y}_d-\dot y)
\end{aligned}
\end{equation}
is negative. First, if $\norm{x_T-y}>\ell_y$, then \begin{equation}\label{dotv<0 case1}
\begin{multlined}
\dot{V} = \gamma_1x_T^{\top}\left(\frac{x_T-y}{\norm{x_T-y}^2}+w_T\right)\\
+ K_1\gamma_1\frac{e_y^{\top}}{\norm{e_y}}\left(\frac{x_T-y}{\norm{x_T-y}^2}+w_T\right)-\frac{e_y^{\top}}{\norm{e_y}}\dot y. 
\end{multlined}
\end{equation}
Therefore, using~\eqref{uy} and~\eqref{dotv<0 case1}, we can easily show $\dot V<0$. 
Let us consider the case where $\norm{x_T-y}\leq\ell_y$. In this case, we show that
\begin{equation}\label{dotv<0 case2}
\begin{aligned}
&\dot{V} = \gamma_2x_T^{\top}\left(\frac{x_T-y}{\norm{x_T-y}^2}g(\norm{x_T-y})+w_T\right)\\
&+ K_1\gamma_2\frac{e_y^{\top}}{\norm{e_y}}\left(\frac{x_T-y}{\norm{x_T-y}^2}g(\norm{x_T-y})+w_T\right)-\frac{e_y^{\top}}{\norm{e_y}}\dot y.
\end{aligned}
\end{equation}
Because 
    $g(\norm{x_T-y})=\exp(-{1}/{\norm{x_T-y}})<1$, 
we can further evaluate the derivative as 
\begin{equation}\label{iv}
\begin{aligned}
\dot{V} 
&<
\begin{multlined}[t][.85\linewidth]
    \gamma_2\left(\frac{\norm{x_T}^2+\norm{x_T}\norm{y}+K_1\norm{x_T-y}}{\norm{x_T-y}^2}\right)\\
+ \gamma_2(\norm{x_T}+K_1)w_T-\frac{e_y^{\top}}{\norm{e_y}}\dot y=0
\end{multlined}\\
\end{aligned}
\end{equation}
\aiyi{The formulas are maybe redundant.}\masaki{Agreed}
as desired. This completes the proof of the theorem.
\del{This completes the proof of Theorem~\ref{Lyapunov}.}
\end{proof}
\begin{figure*}[tb]
  \centering
  \begin{minipage}[t]{0.45\textwidth}
    \centering
    \subfigure[The proposed method with $N = 10$ sheep]{\label{fig:1a}\includegraphics[width=0.85\textwidth]{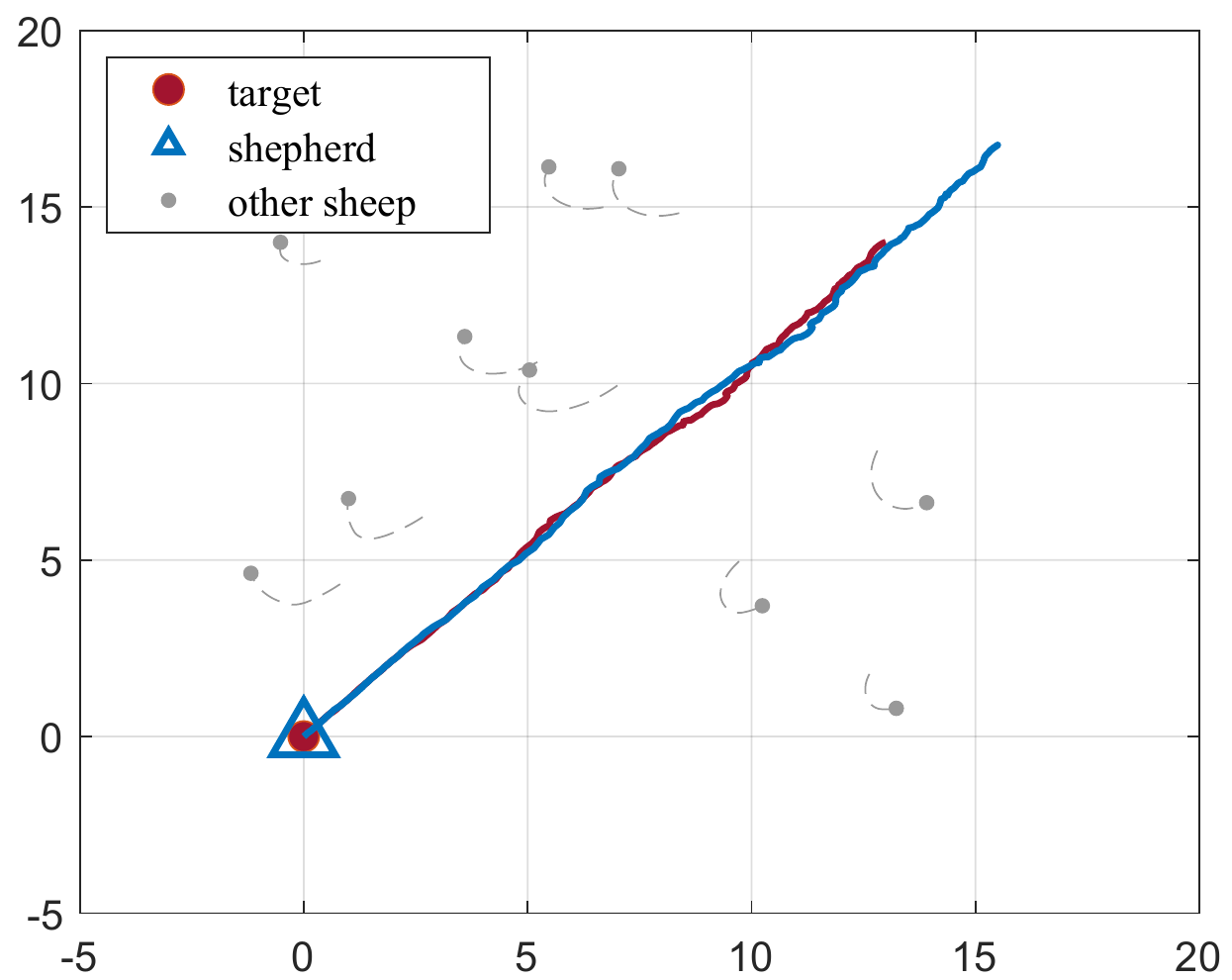}}
\end{minipage}\!\!\!\!\!\!\!\!\!\!\!\!\!
  \begin{minipage}[t]{0.45\textwidth}
    \centering
    \subfigure[The baseline method with $N = 10$ sheep]{\label{fig:1b}\includegraphics[width=0.85\textwidth]{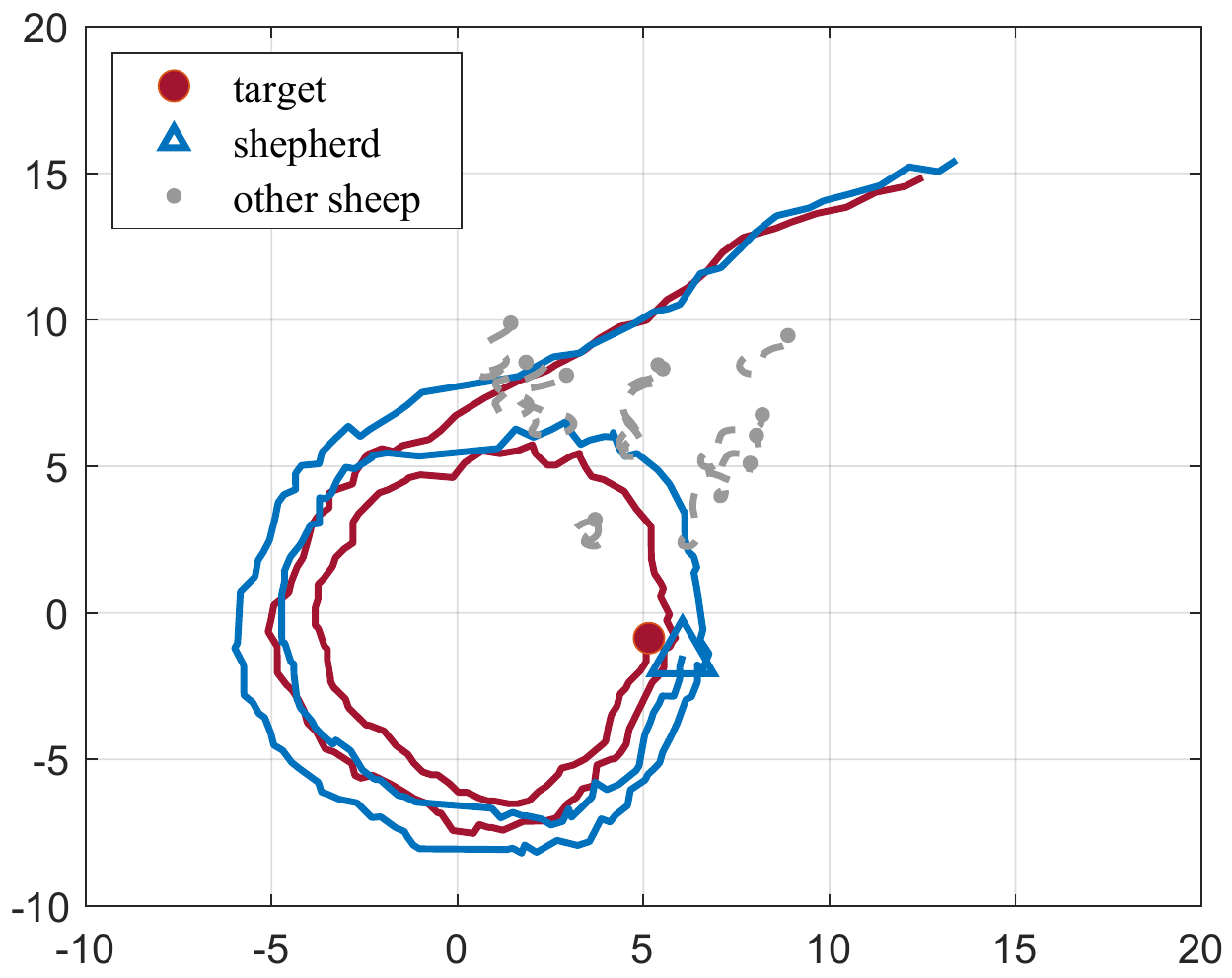}}
  \end{minipage} \\[-1pt]    %
  \begin{minipage}[t]{0.45\textwidth}
    \centering
    \subfigure[The proposed method with $N = 50$ sheep]{\label{fig:1c}\includegraphics[width=0.85\textwidth]{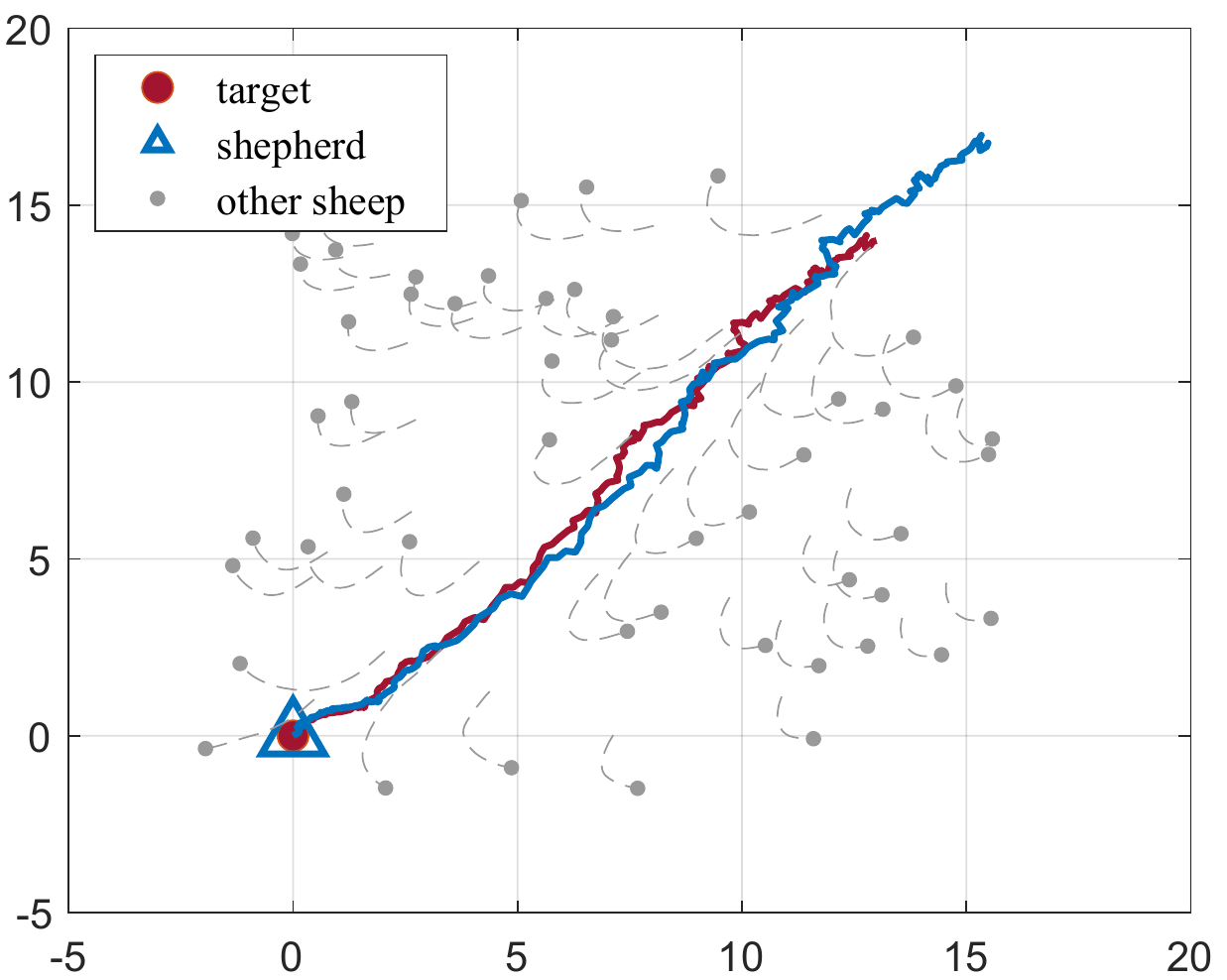}}
  \end{minipage}\!\!\!\!\!\!\!\!\!\!\!\!\!
  \begin{minipage}[t]{0.45\textwidth}
    \centering
    \subfigure[The baseline method with $N = 50$ sheep]{\label{fig:1d}\includegraphics[width=0.85\textwidth]{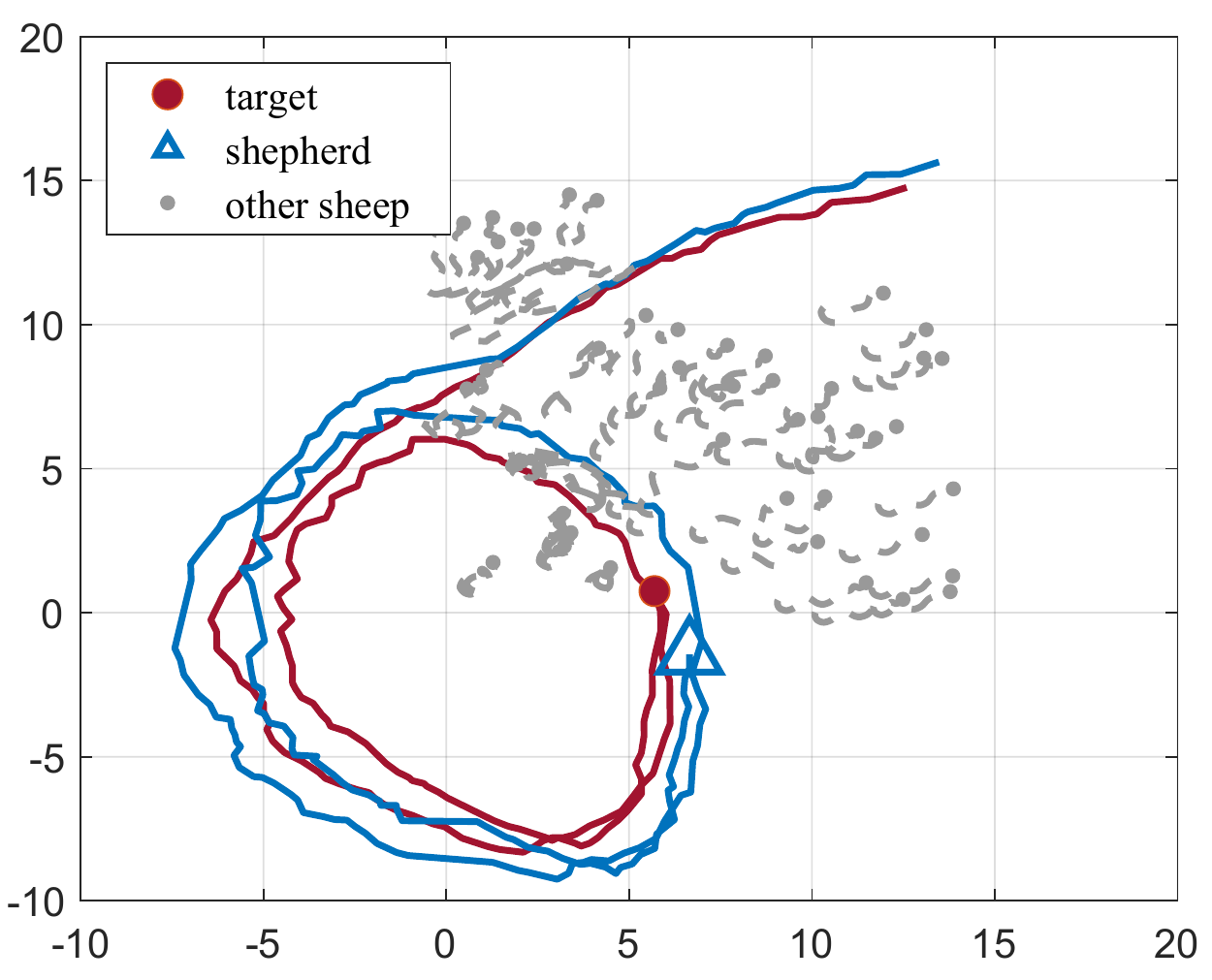}}
  \end{minipage}
  \caption{{The comparison between the proposed method and baseline. These four figures show the overall trajectory of one shepherd regulating the target to the origin in different swarm sizes, where the red line, blue line, and dashed lines represent the trajectories of the target, shepherd, and the other sheep in the swarm.}}
  \label{fig:trajectory}
\end{figure*}
\begin{remark}
    \add{Although Theorem~\ref{Lyapunov} proves only a local effectiveness of the controller~\eqref{uy}, numerical simulations presented in the next section suggests its global effectiveness. We leave the problem of theoretically establishing global efficacy of the controller~\eqref{uy} an open problem.}
\end{remark}

\section{Numerical Simulations}\label{Numerical Simulation}
In this section, we demonstrate the effectiveness of the proposed feedback controller in regulating one target sheep to the origin in a large-scale swarm. The initial conditions for the first experiment were set to $x_T(0)=[13, 14]^{\top}$ and~$y(0)=[17, 17]^{\top}$. We randomly generated $N=200$ non-overlapping sheep in the square region of~$[0,17]^2$. 
We selected the constants as $m_a = 10$, $m_b=5$, $M_b = 10$, $\ell_y = 0.5$, $\gamma_1 = 1$, $\gamma_2 = e^2$, $l=0.5$, and~$K_1 = 1.5$. These constants were selected to satisfy inequality~\eqref{M_b-m_a>fy} in Theorem~\ref{noncollision}; therefore, the swarm guarantees the collision-free property and the controller can safely regulate the target to the origin.
\begin{figure}[bt]
  \centering
\includegraphics[width=0.90\linewidth]{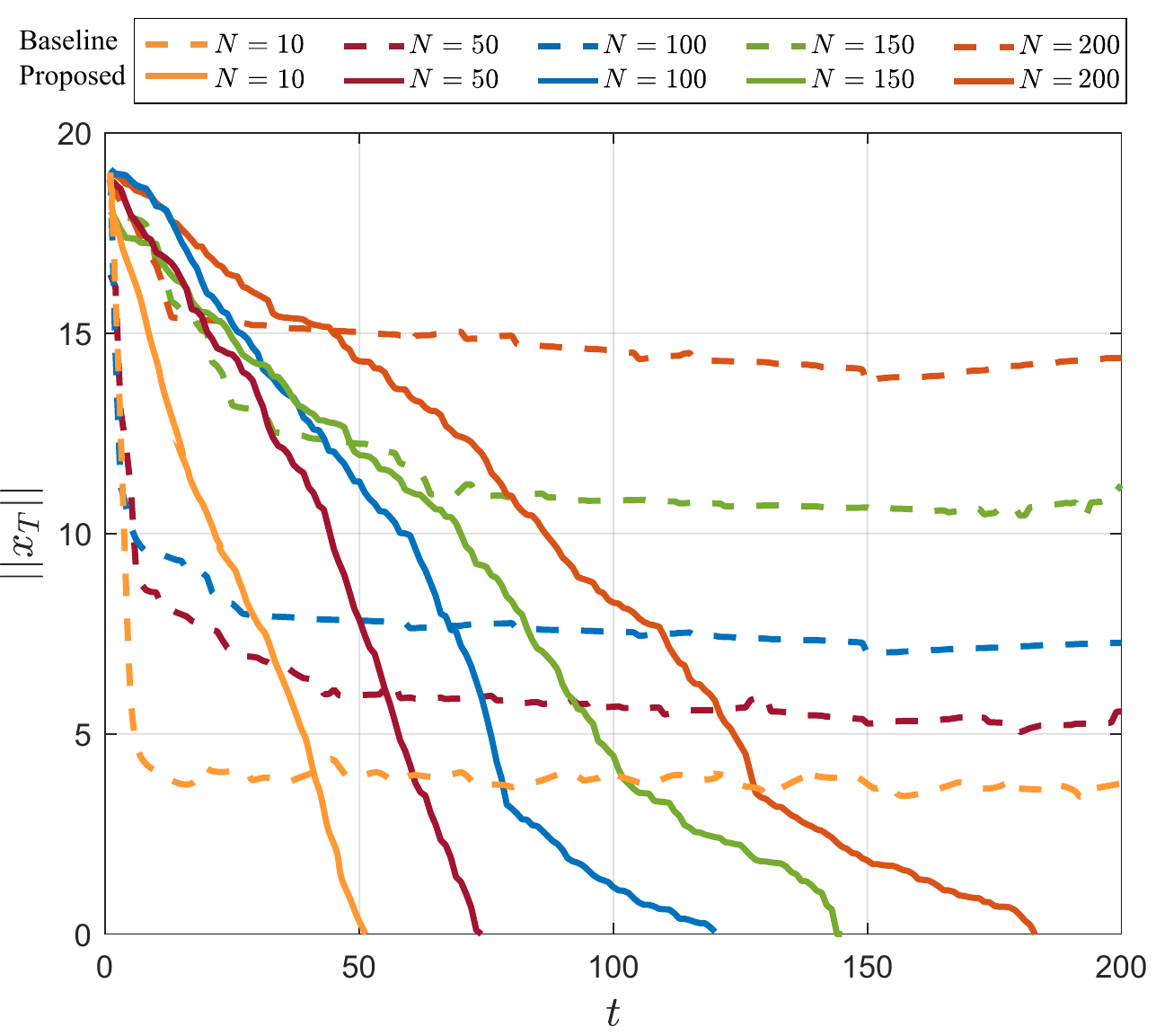}
  \caption{
  The norm of the target position~$\norm{x_T}$ varying from~\mbox{$t=0$} to~$t=200$ by baseline and proposed method, respectively.
 The dashed line represents the baseline method, while the solid line represents the proposed method.}
  \label{fig:compare}
\end{figure}
{To further illustrate the proposed method's effectiveness, we compare its performance with a baseline strategy from the farthest-agent algorithm \cite{tsunoda2018analysis}. The farthest-agent algorithm utilizes the sum of the following two vectors as the velocity input for the shepherd, where one vector drives the shepherd to approach the target, and the other vector avoids the shepherd exceeding the target.
}

{Figure~\ref{fig:trajectory} depicts two sets of comparative experiments, illustrating the change of the swarm trajectories by the proposed and baseline methods. 
{As shown in Figure~\ref{fig:trajectory}, we conducted comparative experiments to show the swarm trajectories change from $t=0$ until either $\norm{x_T}<0.1$ or a designated time limit of~$t=200$.} Figures~\ref{fig:1a} and~\ref{fig:1b} show the overall trajectories of the proposed and baseline methods on a $N=10$ swarm with the same initial placements, respectively, while Figures~\ref{fig:1c} and~\ref{fig:1d} show those on a $N=50$ swarm with the same initial placements.}
{We observe that the baseline method guides the target around the origin but fails in making the target's position converge to the origin. On the other hand, the proposed method is guaranteed to make the tracking error converge to zero and, therefore, succeeds in guiding the target to the origin asymptotically.}
{We further illustrate the effectiveness of the proposed method in different swarm sizes (from $N=10$ to $N=200$) as shown in Figure~\ref{fig:compare}. The dashed line represents the baseline method, and the solid line represents the proposed method. Each plot is obtained by averaging 100 different simulations with different initial placements. From Figure~\ref{fig:compare}, we observe the proposed method regulates the target to the origin before $t=200$ for all different swarm sizes and performs better than the baseline. This further corroborates the conclusion of the comparative analysis of the two methods mentioned above.}

\section{Conclusion and Future works}
In this study, we investigated the problem of using a shepherd to accurately guide a target in a large swarm to its destination. We first analyzed the properties of the swarm and rigorously proved through mathematical analysis that under certain inequality constraints, no sheep in the swarm will collide. Subsequently, we used the Lyapunov stability principle to design a shepherd control law that precisely guides the target to the origin. {We also compared our model with a baseline shepherding control algorithm. Our experimental results demonstrate that the shepherd can precisely guide the target to the goal in both small and large swarms.} 
Our future work includes finding necessary and sufficient conditions of the collision-free property of shepherding swarm.

\balance

\bibliographystyle{IEEEtran}
\bibliography{main}
\balance

\end{document}